\documentclass{article}

\usepackage{arxiv}

\usepackage[utf8]{inputenc}
\usepackage[T1]{fontenc}
\usepackage{hyperref}
\usepackage{url}
\usepackage{booktabs}
\usepackage{amsfonts}
\usepackage{nicefrac}
\usepackage{microtype}
\usepackage{graphicx}
\usepackage{amsmath,amssymb}
\usepackage{bm}
\usepackage{amsthm}
\usepackage{multirow}
\usepackage{float}

\newtheorem{theorem}{Theorem}
\newtheorem{assumption}{Assumption}

\newcommand{\Mtwo}{\mathcal{M}_2}
\newcommand{\DM}{\mathcal{D}_M}
\newcommand{\lcoh}{\lambda_{\mathrm{coh}}}
\newcommand{\lvis}{\lambda_{\mathrm{vis}}}
\newcommand{\Lop}{\mathcal{L}}
\newcommand{\Pioff}{\Pi_{\mathrm{off}}}
\newcommand{\LC}{\mathrm{LC}}
\newcommand{\Tr}{\operatorname{Tr}}
\newcommand{\Dcohlay}{\Delta_{\mathrm{coh}}^{\mathrm{lay}}}
\newcommand{\weff}{\widehat{\omega}}
\newcommand{\aeff}{\widehat{a}}
\newcommand{\eqdef}{\mathrel{\vcenter{\baselineskip0.5ex \lineskiplimit0pt
  \hbox{\scriptsize.}\hbox{\scriptsize.}}}=}
\providecommand{\ket}[1]{\lvert #1\rangle}
\providecommand{\bra}[1]{\langle #1\rvert}

\title{A Coherence Law for Trainability in Noisy Equivariant Quantum Neural Networks}

\author{
  Hassan Ugail \\
  Centre for Visual Computing and Intelligent Systems \\
  University of Bradford \\
  United Kingdom \\
  \And
  Newton Howard \\
  School of Individualized Study \\
  Rochester Institute of Technology \\
  United States \\
}

\begin{document}
\maketitle

\begin{abstract}
Symmetry provides a quantum neural network structure, but on its own it does not keep the network trainable once noise is present. We ask which physical quantity decides whether the gradients of an equivariant circuit survive decoherence, and we answer with a compact training law. Working with $U(1)$-equivariant brickwork circuits that conserve a charge, we find that two distinct effects govern a trainable gradient. Causality fixes where the gradient can live, confining it to the backward light cone of the readout inside the active charge sector. Coherence then determines how fast it decays through the contraction of the off-diagonal sector modes that the projected readout can actually observe. We prove a light-cone reduction that pins the noiseless gradient to the sector-restricted cone with a lower bound independent of the total qubit number, and we define a readout-visible aligned coherence rate as a Rayleigh quotient of the noise generator along the gradient-carrying mode. A perturbative open-system analysis turns this rate into a leading-order training law. Density-matrix simulations then confirm that the finite-noise degradation follows a single accumulated variable built from noise depth and coherence contraction, with a coefficient of determination of $0.979$. The sharpest test comes from a correlated-dephasing channel that has a large worst-case rate but a near-zero aligned rate. The law predicts no gradient loss for this channel, and none is seen. Sector coherence outperforms every standard channel diagnostic we compare it against, and the analysis identifies readout-visible sector coherence as the quantity that links equivariant architecture, open-system dynamics and noisy trainability.
\end{abstract}

\keywords{quantum neural networks \and equivariant quantum machine learning \and trainability \and quantum coherence \and noisy variational circuits \and quantum information}

\section{Introduction}\label{sec:intro}

Variational quantum learning rests on a simple operational requirement. A parametrised circuit can be trained only if the gradients of its output stay measurable at a realistic sampling cost \cite{biamonte2017,preskill2018,cerezo2021review,mitarai2018}. Quantum neural networks and the wider family of variational quantum algorithms encode information into a quantum state, process it through trainable gates, and read out an expectation value whose parameter derivatives drive optimisation \cite{havlicek2019,schuld2021,abbas2021}. When those derivatives fall below the sampling floor, the model cannot be optimised at all, and so the survival of the gradient signal is the question that governs near-term quantum learning.

Two distinct mechanisms can destroy this signal. One is the barren-plateau phenomenon, in which gradient second moments concentrate exponentially around zero as circuits grow wider, deeper, or overly expressive \cite{mcclean2018,cerezo2021bp,holmes2022,ragone2024}. The other is noise. Markovian decoherence acting throughout the circuit suppresses the gradient, drives noise-induced plateaus of its own, and caps the performance reachable by a noisy optimisation loop \cite{wang2021,stilckfranca2021,sharma2022}. The two are physically separate. One concerns the geometry of the noiseless landscape; the other, the open-system dynamics layered on top of it; and a model can be safe from the first while remaining acutely vulnerable to the second. What survives the noise is therefore the operational trainability question, and the natural way in is to ask which physical quantity governs that survival.

Symmetry has become one of the most productive ways to structure quantum neural networks against the first mechanism. Following the lead of classical geometric deep learning \cite{cohen2016,bronstein2021}, equivariant quantum architectures restrict the trainable gates to operations that commute with a symmetry group of the task \cite{larocca2022groupinv,nguyen2024,schatzki2024,meyer2023,ugail2026}. Under a $U(1)$ symmetry generated by the total charge, the Hilbert space splits into charge sectors, and an equivariant ansatz keeps a sector-supported input inside its sector throughout the noiseless evolution. Such circuits have reduced dynamical Lie algebras and can provably escape the exponential gradient concentration that afflicts generic deep circuits \cite{ragone2024,schatzki2024,pesah2021}, while sharpening generalisation from few samples \cite{caro2022,jerbi2023}. A broader principle is at work here. Preserving the structure a model is built on is often what protects its useful signal, a pattern that recurs across machine learning. It appears in deep face recognition under degraded or imperfect data \cite{ugail2026forensic,elmahmudi2019deep}, in transfer-learned visual attribution and interpretable prediction \cite{ugail2023raphael,ibrahim2025facial}, and in the dynamical modelling of organised complexity in physical systems \cite{ugail2025consciousness}. For the present circuits, the charge sectors supply both the physical structure and the natural arena for analysing trainability.

Symmetry preservation, however, is not noise protection. A channel can commute with the $U(1)$ action and still erase the quantum information that carries the gradient. The reason is that the gradient of a sector-projected readout lives not in sector populations but in coherences, the off-diagonal matrix elements within the active charge sector. Single-qubit dephasing makes the point concrete. It is phase-covariant; it moves no population between sectors, and a benchmark reporting only covariance defects and sector leakage would deem it harmless. Yet it contracts the intra-sector off-diagonal elements at a rate set by its strength, eroding the very matrix elements from which the gradient is built \cite{marvian2014ncomm,piani2016,baumgratz2014}. Three questions about a noisy equivariant circuit therefore need to be kept apart. One asks whether the channel respects the symmetry, a question of covariance. Another asks whether the population stays in the active sector, a question of sector retention. The last asks whether the off-diagonal signal that supports the gradient survives, a question of sector coherence. For common hardware channels the three answers diverge \cite{ugail2026access}, and it is the third that decides trainability.

The central claim of the paper is as follows. Gradients of a variational circuit are local response functions, and what governs them divides cleanly in two. Their support is fixed by causality, because in a local brickwork circuit a parameter can reach the readout only from inside the backward light cone of the measured observable, restricted to the relevant charge sector. Their amplitude is carried by intra-sector off-diagonal coherence, because the commutator structure behind a parameter derivative lives on the off-diagonal block of the sector. The trainability resource for a noisy equivariant quantum neural network is therefore neither symmetry covariance nor sector population. It is readout-visible sector coherence, the coherence on the off-diagonal modes of the active sector that the projected readout can actually see, within the light cone the readout can actually reach.

The training law is built around three entities. The first is a light-cone reduction for active gradients in $U(1)$-equivariant brickwork circuits. It shows that the active-gradient second moment is supported only inside the readout backward light cone, and that, under a non-degeneracy condition and a light-cone sector-weight condition on the fixed input, it carries a positive lower bound independent of the total qubit number at fixed depth and sector. The second is the readout-visible aligned coherence rate, a Rayleigh quotient of the noise generator taken along the gradient-carrying mode. We prove that this rate is bounded above by the worst-case scalar rate, with equality for restricted-isotropic noise, and from it we derive a perturbative coherence-loss law for phase-covariant Markovian noise. The third is empirical. Density-matrix simulations confirm that the finite-noise degradation obeys $\DM\propto(\gamma L)^{1.002}\lcoh^{0.964}$ with $R^2\simeq0.979$, that sector coherence outperforms standard channel diagnostics, and that a correlated-dephasing control with a large worst-case rate but a near-zero aligned rate produces no measurable gradient loss. That last control is decisive evidence that the aligned rate, rather than the worst-case rate, is the operative quantity.

It is worth fixing the scope before going further. The clean scalar law holds for the single-excitation sector under restricted-isotropic noise. Higher sectors and structured noise explicitly call for an aligned diagnostic, and coherent unitary miscalibration sits outside the dissipative family altogether. Within these limits, readout-visible aligned sector coherence is the quantity that bridges equivariant architecture, open-system dynamics and noisy trainability.

\section{Methods}\label{sec:methods}

\subsection{Model, readout and gradient observables}\label{sec:model}

We study a $U(1)$-equivariant quantum neural network on a cycle graph $C_n$ with periodic boundary conditions. The conserved charge is $\hat N=\sum_j(I-Z_j)/2$, and the circuit acts on a fixed input that lies in a single charge sector $r$, with $P_r$ the orthogonal projector onto that sector. The input is the localised computational-basis state $\rho_0=\ket{\psi_0}\bra{\psi_0}$ with $\ket{\psi_0}=\ket{1^r 0^{n-r}}$. The variational circuit is a depth-$L$ brickwork ansatz built from two gate families that commute with $\hat N$ \cite{nguyen2024,schatzki2024,larocca2022groupinv}. Each layer applies trainable single-qubit rotations $R_z(\theta_{\ell,j})$ on every site, followed by nearest-neighbour $XY$ hopping gates $\exp[-i\beta_{\ell,e}(X_{e_1}X_{e_2}+Y_{e_1}Y_{e_2})]$ on alternating edges of the cycle. The hopping parameters are fixed and treated as part of the architecture, so the only trainable degrees of freedom are the single-qubit rotation angles.

This study examines how noise degrades the parameter gradients of a fixed-state equivariant response rather than learning from a distribution of encoded data. The state the circuit processes is held fixed, and the object of interest is the second moment of the readout gradient with respect to the trainable angles. This is the quantity whose survival determines whether the circuit can be optimised at all, and it is well defined independently of any data-encoding map. The intra-sector off-diagonal coherence that carries the gradient is generated dynamically by the $XY$ hopping gates, which mix the in-sector basis states as the excitation propagates through the variational layers. The readout-visible coherence studied throughout the paper is therefore a property of the variational evolution acting on the fixed input, and the sector-supported state remains in the same charge sector throughout the noiseless circuit.

The output is the sector-projected local readout,
\begin{equation}
    O_B = P_r Z_0 P_r ,
    \label{eq:readout}
\end{equation}
and the noisy response is,
\begin{equation}
    f^\gamma_\theta=\Tr\!\bigl[O_B\,\Phi^\gamma_\theta(\rho_0)\bigr],
    \label{eq:prediction}
\end{equation}
where $\Phi^\gamma_\theta$ denotes the layered circuit channel with single-qubit Markovian noise of rate $\gamma$ applied once per layer \cite{lindblad1976,gks1976,breuerpetruccione2002}. Trainability is quantified by the prediction-gradient second moment,
\begin{equation}
    \Mtwo(\theta_i;\gamma)
    =
    \mathbb{E}_{\theta}
    \Bigl[\bigl(\partial_{\theta_i}f^\gamma_\theta\bigr)^2\Bigr],
    \label{eq:m2pred}
\end{equation}
with expectations taken over a small-box parameter prior of width $\delta_{\mathrm{init}}$ \cite{grant2019} and derivatives evaluated by the parameter-shift rule \cite{mitarai2018,schuld2019,wierichs2022}. The central response variable is the relative degradation,
\begin{equation}
    \DM(\gamma)
    =
    \log\!\bigl[\Mtwo(\theta_i;0)\,/\,\Mtwo(\theta_i;\gamma)\bigr],
    \label{eq:dm}
\end{equation}
which measures on a logarithmic scale how much active gradient strength the noise has removed.

\subsection{Active gradients and the backward light cone}\label{sec:lightcone}

Before asking how noise damages a gradient, it helps to ask where a gradient can live at all. The parameter-shift identity expresses a derivative as a commutator response,
\begin{equation}
    \partial_{\theta_i}f^0_\theta
    =
    -\tfrac{i}{2}\,\Tr\!\bigl[O_B(\ell_i)\,[Z_{q_i},\rho^\theta_{\ell_i}]\bigr],
    \label{eq:psr}
\end{equation}
where $O_B(\ell_i)$ is the readout evolved backwards in the Heisenberg picture. Each brickwork layer propagates operator support by at most one bond, so $O_B(\ell_i)$ is supported on the backward light cone $\LC(O_B)$ of the readout qubit, and outside that cone the commutator vanishes by causal locality. Because every gate commutes with $\hat N$, the trace restricts further to the charge sectors within the cone. A theorem makes this precise, subject to two conditions on the fixed background.

\begin{assumption}[Light-cone algebra non-degeneracy]\label{ass:nondegen}
For the fixed hopping background $\beta^\dagger$, the commutator between the light-cone-evolved readout operator and the active generator is not identically zero on the sector-restricted light-cone subspace.
\end{assumption}

\begin{assumption}[Light-cone sector weight]\label{ass:weight}
The fixed input places a non-vanishing charge weight inside the readout light cone, uniformly in system size. There is a constant $w_0>0$, independent of $n$ at fixed $L$ and $r$, such that the light-cone reduced state $\rho_{\LC}$ satisfies $\Tr[P^{\LC}_{q\geq1}\,\rho_{\LC}]\geq w_0$, where for the single-excitation sector $P^{\LC}_{q\geq1}=P^{\LC}_{q=1}$ projects onto the one-excitation subspace of the cone. The condition excludes inputs in which the excitation delocalises so strongly that its weight inside a fixed-size cone vanishes as $1/n$.
\end{assumption}

\begin{theorem}[Light-cone reduction of active gradients]\label{thm:locality}
Let $\LC(O_B)$ be the backward light cone of the readout, let $|\LC|$ be the number of qubits in it, and let $d_{\LC}=\sum_{q=0}^{r}\binom{|\LC|}{q}$ be the sector-restricted light-cone dimension. Active gradients are supported only inside the cone, since $\partial_{\theta_i}f^0_\theta=0$ whenever $\theta_i\notin\LC(O_B)$. Under Assumptions~\ref{ass:nondegen} and~\ref{ass:weight}, for any active parameter $\theta_i\in\LC(O_B)$ and the small-box prior of width $\delta_{\mathrm{init}}$,
\begin{equation}
    \Mtwo(\theta_i;0)
    \geq
    c_{\LC}\bigl(L,r,\delta_{\mathrm{init}},\beta^\dagger,O_B,w_0\bigr)
    >0 ,
    \label{eq:lcbound}
\end{equation}
where $c_{\LC}$ is independent of the total system size $n$ at fixed $L$ and $r$.
\end{theorem}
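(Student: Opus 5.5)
The proof splits into two parts: a support statement proved from causal locality in the Heisenberg picture, and a lower bound obtained by reducing $\Mtwo(\theta_i;0)$ to a finite-dimensional quantity on the sector-restricted light cone and then using continuity and compactness of the small-box prior.

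\textbf{Support.} Start from the commutator form \eqref{eq:psr}. I will show by induction on layers that $O_B(\ell)$, the readout evolved backwards through layers $L,\dots,\ell+1$, is supported on the set of qubits within graph distance $L-\ell$ (up to one extra bond per brick) of qubit $0$. The base case is $Z_0$. For the projector $P_r$ I will use that $P_r$ commutes with every gate, so $O_B(\ell)=P_r\,U^\dagger Z_0 U\,P_r$ and the locality question only concerns $U^\dagger Z_0U$. The induction step uses that conjugation by a gate acting on a bond outside the current support leaves the operator invariant, and that the $R_z$ layer adds no support. Now let $q_i$ lie outside $\LC(O_B)$. Then $Z_{q_i}$ commutes with $U^\dagger Z_0 U$ and with $P_r$; the latter holds because $Z_{q_i}$ commutes with $\hat N$. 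Cyclicity of the trace then gives $\Tr[O_B(\ell_i)[Z_{q_i},\rho]]=\Tr[[O_B(\ell_i),Z_{q_i}]\rho]=0$.

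\textbf{Reduction to the cone.} For $\theta_i$ inside the cone I will write $\partial_{\theta_i}f^0_\theta=-\tfrac{i}{2}\Tr[C_i\,\rho^\theta_{\ell_i}]$ with $C_i=[O_B(\ell_i),Z_{q_i}]$. The operator $C_i$ acts nontrivially only on the cone qubits and commutes with the cone charge $\hat N_{\LC}$, so it is block diagonal in cone charge $q$. Gates outside the cone cannot affect this expectation, so only the reduced state $\rho_{\LC}$ matters. That state is the input pushed through the cone gates, with the boundary gates traced appropriately. Since $\ket{\psi_0}$ is a product basis state, the cone-restricted dynamics involve only the $|\LC|=O(L)$ cone qubits and the $O(L^2)$ angles inside the cone. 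The derivative is therefore a function $g_i(\theta_{\LC})$ on a fixed finite-dimensional space, with sector dimension at most $d_{\LC}$ and independent of $n$. Because the input is a computational basis state, $\rho_{\LC}$ restricted to the cone charge sectors $q\le r$ is an explicit function of $\theta_{\LC}$ and $\beta^\dagger$.

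\textbf{Lower bound.} $\Mtwo(\theta_i;0)$ is the average of $g_i^2$ over the box $[-\delta_{\mathrm{init}},\delta_{\mathrm{init}}]^{|\theta_{\LC}|}$. The function $g_i$ is a trigonometric polynomial, hence real-analytic. If it is not identically zero on the box, its zero set has measure zero, and the average of $g_i^2$ is a strictly positive number depending only on $(L,r,\delta_{\mathrm{init}},\beta^\dagger,O_B)$. Establishing that $g_i\not\equiv0$ is the main obstacle. I plan to combine the two assumptions as follows. Assumption~\ref{ass:nondegen} gives $P^{\LC}_{q\ge1}C_iP^{\LC}_{q\ge1}\neq0$; the $q=0$ block vanishes because $Z_{q_i}$ acts as a scalar there. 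I will then vary the angles to generate states on the $q\ge1$ cone sector with enough richness to detect this nonzero block. Assumption~\ref{ass:weight} supplies weight at least $w_0$ there, so the resulting detection carries a factor $w_0$ rather than $1/n$.

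The delicate point is that a fixed input combined with small-box angles may not span enough states. If that happens, I would fall back on perturbing around $\theta=0$: expand $g_i$ to first order in $\theta$ and show that some coefficient is proportional to a nonzero matrix element of $C_i$ weighted by $\rho_{\LC}$. The uniformity in $n$ follows because every quantity is built from cone data alone. For the cycle with $n>2L+2$, the cone is isomorphic for all $n$. The finitely many small $n$ below that threshold can be absorbed by taking the minimum of the bound over them.
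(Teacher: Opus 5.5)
The step you flag as the main obstacle is a genuine gap. Neither of your routes can close it, because $g_i\not\equiv0$ does not follow from Assumptions~\ref{ass:nondegen} and~\ref{ass:weight}.

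Take a first-layer angle $\theta_{0,q}$ with $q$ inside the cone. Its rotation acts directly on $\rho_0=\ket{\psi_0}\bra{\psi_0}$, which is diagonal in the computational basis. Hence $[Z_q,\rho_0]=0$ and $g_i\equiv0$ for every $\theta$. No choice of the remaining angles helps, and your expansion about $\theta=0$ vanishes to all orders.

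Yet both assumptions hold. For $r=1$, write $\ket{e_k}$ for the state with the excitation on site $k$, so $\ket{\psi_0}=\ket{e_0}$. On the sector, the Heisenberg-evolved readout is $I-2\ket{v}\bra{v}$, with $\ket{v}=W^\dagger\ket{e_0}$ and $W$ the circuit after this rotation layer. Its commutator with $Z_q=I-2\ket{e_q}\bra{e_q}$ is $4\bigl[\ket{e_q}\bra{e_q},\ket{v}\bra{v}\bigr]$, which is nonzero as soon as $v$ has weight on $q$ and on some other site. Meanwhile $w_0=1$, because the excitation starts on the readout qubit. This is exactly the layer-$0$ inactivity reported in Figure~\ref{fig:activity}.

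The perturbative fallback is also empty at zeroth order for later layers. Let $K$ be complex conjugation and $S=\prod_{j\,\mathrm{odd}}Z_j$. Conjugation by the antiunitary $SK$ maps the circuit unitary $U(\theta,\beta^\dagger)$ to $U(-\theta,\beta^\dagger)$ while fixing $\rho_0$ and $O_B$. So whenever the cone is bipartite (open chain, even cycle, or any non-wrapping cone), $f^0_\theta$ is even in $\theta$ and $g_i(0)=0$ for every parameter. A certificate would have to come from Hessian-level nested commutators, on which Assumption~\ref{ass:nondegen} says nothing.

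What is missing is a state-level non-degeneracy hypothesis: that $\theta_{\LC}\mapsto\Tr[C_i\,\rho^\theta_{\ell_i}]$ is not identically zero. The paper's proof tacitly uses exactly this when it declares the integrand non-vanishing ``by Assumption~\ref{ass:nondegen}''. It therefore glosses over the same point rather than resolving it.

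With that hypothesis your argument closes, and it is tighter than the paper's in two places. First, real-analyticity carries non-vanishing from anywhere in $\mathbb{R}^{p_{\LC}}$ into the small box, whereas the paper's open-set bound silently needs a nonzero point inside the box. Second, your cone-isomorphism argument yields $n$-uniformity directly, given that $\beta^\dagger$ agrees on the cone for all $n$. The paper instead rests uniformity on an unproved proportionality of $\mu(\beta^\dagger)$ to the cone sector weight; under your argument, Assumption~\ref{ass:weight} becomes automatic for the localised input.

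One small repair: $C_i$ is not cone-supported, since $P_r$ is a global projector. Drop the projectors via $P_r\rho^\theta_{\ell_i}P_r=\rho^\theta_{\ell_i}$, and apply the non-degeneracy to $[A,Z_{q_i}]$, where $A\otimes I$ is the Heisenberg-evolved $Z_0$.
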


The locality statement that gradients vanish outside the cone needs no assumption on the input. The $n$-independent lower bound is the stronger claim and rests on Assumption~\ref{ass:weight}, which guarantees that the excitation keeps finite weight inside the fixed-size cone as $n$ grows. The theorem frames everything that follows. Only the light-cone parameters carry signal, so the question becomes what noise does to the gradient amplitude inside the cone, and that amplitude is carried by intra-sector coherence.

\begin{proof}[Proof of Theorem~\ref{thm:locality}]
We adopt the generator convention $R_z(\theta)=\exp(-i\theta Z/2)$, so the half-angle generator is $Z/2$ and the parameter-shift rule uses shifts of $\pi/2$. If parameter $\theta_i$ acts on qubit $q_i$ in layer $\ell_i$, the parameter-shift identity gives Equation~\eqref{eq:psr}. Parameters for which the commutator $[Z_{q_i},O_B(\ell_i)]$ vanishes identically on the sector-restricted Hilbert space are classified as inactive and excluded from Theorem~\ref{thm:locality}.

The proof begins with the sector-restricted light-cone reduction. Each brickwork layer propagates operator support by at most one bond, so the Heisenberg-evolved readout $O_B(\ell_i)$ is supported on the backward light cone $\LC(O_B)$. Outside the cone the commutator vanishes by causal locality, and because all gates commute with $\hat N$ the trace further restricts to the charge sectors $q\leq r$ within the cone, whose dimension is $d_{\LC}=\sum_{q=0}^{r}\binom{|\LC|}{q}$, replacing the full sector dimension $\binom{n}{r}$.

Next, the prior factorises over inactive coordinates. Writing $\theta=(\theta^{\LC},\theta^{\bar\LC})$ for parameters inside and outside the cone, the derivative for an active parameter depends only on $\theta^{\LC}$. The small-box prior is a product measure, so the coordinates outside the cone integrate to unit mass and
\begin{equation}
    \mathbb{E}_\theta\!\bigl[|\partial_{\theta_i}f^0_\theta|^2\bigr]
    =
    \int_{[-\delta_{\mathrm{init}},\delta_{\mathrm{init}}]^{p_{\LC}}}
    |\partial_{\theta_i}f^0_{\theta^{\LC}}|^2\,
    \frac{\mathrm{d}^{p_{\LC}}\theta^{\LC}}{(2\delta_{\mathrm{init}})^{p_{\LC}}},
    \label{eq:factorisation}
\end{equation}
where $p_{\LC}$, the number of light-cone parameters, depends on the local circuit geometry but not on the total number of qubits at fixed $L$ and $r$. By Assumption~\ref{ass:nondegen} the integrand is a continuous function of $\theta^{\LC}$ that is not identically zero, so there is an open set $U_{\LC}$ on which $|\partial_{\theta_i}f^0_{\theta^{\LC}}|^2\geq\mu(\beta^\dagger)>0$. The in-cone signal $\mu(\beta^\dagger)$ is proportional to the active-sector light-cone weight $\Tr[P^{\LC}_{q\geq1}\rho_{\LC}]$, so Assumption~\ref{ass:weight} keeps it from vanishing as $1/n$ when the excitation delocalises. Combining the factorisation with the non-degeneracy and sector-weight floor gives
\begin{equation}
    \Mtwo(\theta_i;0)
    \geq
    \frac{\mathrm{vol}(U_{\LC})}{(2\delta_{\mathrm{init}})^{p_{\LC}}}\,
    w_0\,\widetilde{\mu}(\beta^\dagger)
    \eqdef
    c_{\LC}(L,r,\delta_{\mathrm{init}},\beta^\dagger,O_B,w_0)>0,
    \label{eq:lowerbound}
\end{equation}
a constant independent of $n$ at fixed $L$ and $r$, where $\widetilde{\mu}$ is the in-cone signal normalised by the sector weight. The bound is conditional on the fixed non-degenerate background $\beta^\dagger$ and on Assumption~\ref{ass:weight}, and is not claimed to be uniform over degenerate backgrounds or maximally delocalised inputs.
\end{proof}

\subsection{The readout-visible aligned coherence rate}\label{sec:rate}

To make the coherence dependence quantitative, we need an operator-level rate. Let $\Phi^\gamma=I+\gamma\Lop+O(\gamma^2)$ be the small-noise expansion of the single-qubit channel and let $\Pioff$ project onto operators on the off-diagonal block of sector $r$ in the eigenbasis of $\hat N$. The worst-case coherence-contraction rate is,
\begin{equation}
    \lcoh
    \eqdef
    \sup_{\substack{A=\Pioff A\\ \|A\|_F=1}}
    -\mathrm{Re}\,\langle A,\Pioff\Lop(A)\rangle_F ,
    \label{eq:lcohdef}
\end{equation}
the largest decay rate of the restricted generator on the intra-sector off-diagonal subspace in the Frobenius inner product. This worst-case rate is convenient but not fundamental, because it maximises over all off-diagonal directions while the gradient occupies only one. The active gradient of parameter $\theta_i$ is generated by the commutator direction $G_i(\theta)=\Pioff[Z_{q_i},O_B(\ell_i;\theta)]$, the readout-visible mode through which the projected readout actually responds. The fundamental object is the readout-visible aligned rate, the ensemble-weighted decay rate of the restricted generator along this direction,
\begin{equation}
    \lvis(\theta_i)
    \eqdef
    \mathbb{E}_{\theta}\!\left[
    \frac{-\mathrm{Re}\,\langle G_i(\theta),\Pioff\Lop(G_i(\theta))\rangle_F}
         {\langle G_i(\theta),G_i(\theta)\rangle_F}
    \right].
    \label{eq:lvisdef}
\end{equation}
By construction, the aligned rate is bounded above by the worst-case rate, since for every $\theta$ the supremum in Equation~\eqref{eq:lcohdef} runs over a set that contains the normalised gradient direction, so
\begin{equation}
    \lvis(\theta_i)\leq\lcoh
    \label{eq:visbound}
\end{equation}
for every active parameter. The two coincide when the restricted generator acts isotropically on the off-diagonal block, because every normalised direction then decays at the common rate. The distinction is invisible for an isotropic family but decisive for structured noise, where a channel can contract directions orthogonal to $G_i$ and so carry a large $\lcoh$ while leaving $\lvis$ near zero.

Applying the noise once per layer accumulates the aligned contraction into the layered coherence defect $\Dcohlay(\gamma)=\gamma L\,\lvis+O((\gamma L)^2)$, which reduces to $\gamma L\,\lcoh$ in the restricted-isotropic case. For the four restricted-isotropic channels used in the main sweep, the worst-case and aligned rates agree to four significant figures, with $\lcoh=1.000$ for amplitude damping, $3.996$ for dephasing, $6.648$ for depolarising and $7.973$ for the $U(1)$-breaking $X$-error control. Here, restricted-isotropic refers to the action on the intra-sector off-diagonal block rather than to ordinary channel isotropy.

\begin{theorem}[Perturbative coherence-loss law]\label{thm:perturb}
For an active parameter $\theta_i$ and a phase-covariant Markovian noise channel of rate $\gamma$ with Frobenius-dissipative restricted generator acting once per layer \cite{holevo1996,lindblad1976,gks1976}, and under the first-order alignment condition that the noisy derivative acquires no leading-order component along readout-visible directions orthogonal to $G_i$, the leading-order degradation is a sum of layer-resolved aligned contractions,
\begin{equation}
    \Mtwo(\theta_i;\gamma)
    =
    \Mtwo(\theta_i;0)
    \Bigl[1-2\,\gamma\textstyle\sum_{\ell=1}^{L}\lvis^{(\ell)}(\theta_i)+O\bigl((\gamma L\,\lcoh)^2\bigr)\Bigr].
    \label{eq:perturb}
\end{equation}
In the restricted-isotropic family, where every off-diagonal direction decays at the common rate, the layer sum collapses to a scalar accumulation,
\begin{equation}
    \Mtwo(\theta_i;\gamma)
    =
    \Mtwo(\theta_i;0)
    \bigl[1-2\,\gamma L\,\lvis(\theta_i)+O\bigl((\gamma L\,\lcoh)^2\bigr)\bigr],
    \label{eq:perturbscalar}
\end{equation}
with the bounded alignment factor $\omega_i=\lvis(\theta_i)/\lcoh\in[0,1]$ equal to one for isotropic generators and strictly less than one when the gradient mode is misaligned with the fastest-decaying direction.
\end{theorem}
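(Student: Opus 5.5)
The plan is a first-order Dyson expansion of the layered channel in $\gamma$, carried out layer by layer in the Heisenberg picture and inserted into the parameter-shift representation Eq.~\eqref{eq:psr}. Write the noisy channel as $\Phi^\gamma_\theta=\mathcal{N}_L\circ\mathcal{U}_L\circ\cdots\circ\mathcal{N}_1\circ\mathcal{U}_1$ with $\mathcal{N}_\ell=I+\gamma\Lop+O(\gamma^2)$. Expanding to first order gives
\begin{equation}
f^\gamma_\theta=f^0_\theta+\gamma\sum_{\ell=1}^{L}\Tr\!\bigl[O_B(\ell)\,\Lop(\rho^\theta_\ell)\bigr]+O(\gamma^2L^2),
\end{equation}
with $O_B(\ell)$ the noiselessly back-propagated readout. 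Differentiating in $\theta_i$ via the parameter-shift rule, which is exact for the noisy channel because the $R_z$ gates are still generated by $Z/2$, gives
\begin{equation}
\partial_{\theta_i}f^\gamma_\theta=\partial_{\theta_i}f^0_\theta+\gamma\sum_\ell \delta_\ell(\theta)+O(\gamma^2L^2).
\end{equation}
Each $\delta_\ell$ is a commutator response in which one insertion of $\Lop$ sits between the readout and the active generator.

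Next I would use phase covariance. The noise commutes with the $U(1)$ action, so $\Lop$ preserves the decomposition into sector populations and intra-sector off-diagonal blocks. Population (diagonal) components do not contribute to the derivative, because the commutator with $Z_{q_i}$ annihilates the diagonal in the $\hat N$-adapted computational basis. Only the $\Pioff$ part of $\Lop$ therefore enters $\delta_\ell$. Using the adjoint of the Heisenberg action at layer $\ell$, I would write $\delta_\ell$ as an inner product $\langle G_i^{(\ell)},\Pioff\Lop(\cdot)\rangle_F$, where the noiseless derivative itself equals $\langle G_i,\cdot\rangle_F$. The first-order alignment hypothesis is exactly what lets me replace the perturbed vector by its projection onto $G_i$. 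This gives
\begin{equation}
\delta_\ell=-\lvis^{(\ell)}(\theta)\,\partial_{\theta_i}f^0_\theta,
\end{equation}
where the coefficient is the pointwise Rayleigh quotient appearing in Eq.~\eqref{eq:lvisdef}. Frobenius dissipativity guarantees that this real part is nonnegative.

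With this in hand I would square and average. Squaring gives
\begin{equation}
(\partial f^\gamma)^2=(\partial f^0)^2\Bigl[1-2\gamma\sum_\ell\lvis^{(\ell)}(\theta)\Bigr]+O(\gamma^2).
\end{equation}
Taking the small-box expectation, and using Theorem~\ref{thm:locality} to guarantee $\Mtwo(\theta_i;0)>0$ so that the normalised form makes sense, yields Eq.~\eqref{eq:perturb}. I then interpret $\lvis^{(\ell)}$ as the gradient-weighted ensemble average of the Rayleigh quotient. The remainder is controlled because each layer contributes at most $\gamma\lcoh$ in norm on the off-diagonal block, by Eq.~\eqref{eq:lcohdef} and the bound Eq.~\eqref{eq:visbound}. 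Second-order cross terms are then bounded by $(\gamma L\lcoh)^2$ times the noiseless moment. For the restricted-isotropic case, $\Pioff\Lop\Pioff=-\lcoh\,\Pioff$ (up to anti-Hermitian parts that drop under $\mathrm{Re}$). Every Rayleigh quotient then equals $\lcoh$ independently of $\ell$ and $\theta$, so the sum collapses to $L\lvis$, giving Eq.~\eqref{eq:perturbscalar}. The statement $\omega_i\in[0,1]$ follows from nonnegativity together with Eq.~\eqref{eq:visbound}.

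The main obstacle is the averaging step. The pointwise quotient is weighted by $(\partial f^0)^2$, whereas Eq.~\eqref{eq:lvisdef} is an unweighted expectation. I would first try to show that the discrepancy is of higher order in $\delta_{\mathrm{init}}$ for the small-box prior. If that fails, I would define $\lvis^{(\ell)}$ in the theorem as the gradient-weighted average, which is consistent with the statement's ``layer-resolved'' notation. A secondary difficulty is justifying that the off-diagonal part of $\Lop(\rho_\ell)$ couples only through $G_i$, which is precisely what the alignment condition is assumed to supply.
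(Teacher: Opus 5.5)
Your proposal follows the paper's proof essentially step for step. Both arguments expand the once-per-layer noise to first order inside the commutator form of the derivative and project onto $G_i$, so that each layer contributes its Rayleigh-quotient rate $\lvis^{(\ell)}$. Both then use the alignment condition to discard the orthogonal response (you impose it pointwise, the paper only on the ensemble-averaged cross terms), sum the first-order terms over layers, invoke layer-independence of the quotient in the restricted-isotropic family, and take $\omega_i\in[0,1]$ from Eq.~\eqref{eq:visbound}.

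The averaging subtlety you flag is genuine, and the paper passes over it by simply reading off $\lvis^{(\ell)}$ ``after the $\theta$ ensemble average''. Your fallback of interpreting $\lvis^{(\ell)}$ in Eq.~\eqref{eq:perturb} as the $(\partial_{\theta_i}f^0_\theta)^2$-weighted average of the pointwise quotient is the right fix, and it makes no difference for Eq.~\eqref{eq:perturbscalar}, where the quotient is constant.
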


Because three closely related alignment quantities appear in the paper, Table~\ref{tab:alignment} collects their definitions. The theoretical ratio $\omega_i=\lvis/\lcoh$ is the object the theory identifies as fundamental, its perturbative empirical estimate $\weff$ is read off the small-noise experiments, and the finite-noise multiplier $\aeff$ is an a posteriori diagnostic measured in the presence of finite noise.

\begin{table}[t]
\caption{The three alignment quantities used in the paper. The first is a theoretical ratio, the second its perturbative empirical estimate, and the third a finite-noise empirical multiplier.}\label{tab:alignment}
\centering
\begin{tabular}{@{}lll@{}}
\toprule
Symbol & Definition & Role\\
\midrule
$\omega_i=\lvis/\lcoh$ & ratio of aligned to worst-case rate & theoretical alignment factor, in $[0,1]$\\
$\weff$ & $\widetilde{\Delta}/(2\gamma L\,\lcoh)$ in the perturbative window & empirical estimate of $\omega_i$ at small noise\\
$\aeff$ & finite-noise weight fitted to paired degradation & a posteriori diagnostic of residual alignment\\
\bottomrule
\end{tabular}
\end{table}

\begin{proof}[Proof of Theorem~\ref{thm:perturb}]
Let $\Phi^\gamma=I+\gamma\Lop+O(\gamma^2)$ be the small-noise expansion of the layer channel and $\Pioff$ the projector onto the off-diagonal block of sector $r$. The worst-case rate is Equation~\eqref{eq:lcohdef} and the aligned rate is the Rayleigh quotient of Equation~\eqref{eq:lvisdef} with respect to the gradient mode $G_i$. Inserting the expansion into the commutator form of the derivative and projecting onto $G_i(\theta)=\Pioff[Z_{q_i},O_B(\ell_i;\theta)]$ gives, at each layer at which the noise acts, a first-order contraction of the gradient mode propagated to that layer at the pointwise rate $\lvis^{(\ell)}(\theta)$. The single-layer contraction along the mode is the Rayleigh quotient by definition, and for a Frobenius-dissipative restricted generator, each such quotient is a non-negative spectral decay rate. Because the noise acts once per layer and the first-order terms add, the leading-order degradation is the layer sum $2\gamma\sum_{\ell=1}^{L}\lvis^{(\ell)}(\theta_i)$ of Equation~\eqref{eq:perturb} after the $\theta$ ensemble average.

Writing the first-order noisy derivative as $\partial_{\theta_i}f^\gamma_\theta=\partial_{\theta_i}f^0_\theta(1-\gamma\sum_\ell\lvis^{(\ell)})+\gamma\,\delta_i$, where $\delta_i$ collects the first-order response along readout-visible directions orthogonal to $G_i$, the square produces a diagonal term and cross terms linear in $\delta_i$. The first-order alignment condition is exactly the statement that the ensemble average of these cross terms vanishes at leading order, which holds identically in the restricted-isotropic family where $\delta_i=0$ because every off-diagonal direction decays at the common rate. The collapse to the scalar accumulation $\gamma L\,\lvis$ of Equation~\eqref{eq:perturbscalar} requires the aligned rate to be layer-independent, which holds in that family because the interleaved unitaries rotate the gradient mode without changing its decay rate. Outside it, the interleaved unitaries can carry the mode through directions of differing aligned rate, the layer-resolved rates no longer coincide, and the scalar $\lcoh$ can no longer stand in for the layer-averaged aligned contraction. The alignment ratio $\omega_i=\lvis/\lcoh$ inherits the bounds $[0,1]$ from Equation~\eqref{eq:visbound}.
\end{proof}

\subsection{Noise channels and structural diagnostics}

The restricted-isotropic family comprises four single-qubit channels applied independently to every qubit once per layer, each parameterised so the per-layer rate $\gamma$ enters the Kraus probabilities linearly at small noise. The term isotropic refers to the induced action on the intra-sector off-diagonal block in the $r=1$ sector, on which the restricted generator contracts every normalised direction at the same rate so that $\lvis=\lcoh$. Amplitude damping models energy relaxation, dephasing applies a $Z$ flip and models transverse decoherence, depolarising replaces the qubit state by the maximally mixed state, and the $X$-error channel applies a bit flip and serves as the $U(1)$-breaking control. The structured family comprises inhomogeneous dephasing, site-dependent amplitude damping, biased Pauli noise, a coherent-dissipative mixed channel, and correlated two-site dephasing, the last built so its decaying coherence direction is orthogonal to the gradient mode, making it the zero-alignment control. The worst-case rates used in the regressions are $\lcoh=1.000$ for amplitude damping, $3.996$ for dephasing, $6.648$ for depolarising, $7.973$ for the $X$-error control, $3.996$ for inhomogeneous dephasing, $1.000$ for site-dependent amplitude damping, $6.000$ for biased Pauli noise, $1.500$ for the coherent-dissipative mix, and $4.500$ for correlated dephasing. The aligned rate is evaluated directly from the generator only for the correlated-dephasing control, where the calculation confirms $\lvis\approx0$ against $\lcoh=4.5$. The comparison metrics, average gate infidelity $1-F_{\mathrm{avg}}$, unitarity loss $1-u$, the diamond-distance upper bound, purity loss, and state off-diagonal loss, each enter the predictor comparison paired with $\log(\gamma L)$ exactly as the coherence rate does.

\subsection{Numerical protocol and regression analysis}

All simulations evolve full density matrices, with charge-sector preservation verified to floating-point precision before noise is applied. Noise acts once per layer as a Markovian Kraus map \cite{lindblad1976,gks1976,breuerpetruccione2002,nielsenchuang2010}. The small-box prior draws every trainable angle independently and uniformly from $[-\delta_{\mathrm{init}},\delta_{\mathrm{init}}]$ with $\delta_{\mathrm{init}}=0.5$, while the hopping parameters remain fixed. Derivatives use the parameter-shift rule with shifts of $\pi/2$. For the small-noise experiments the noiseless and noisy evaluations share common random numbers, which removes the dominant sampling variance from the ratio defining $\DM$ and is essential in the regime $\gamma L\,\lcoh\ll1$. The main sweep uses the noise grid $\gamma L\in\{0.01,0.03,0.05,0.1,0.2,0.3\}$ per channel-depth pair, the perturbative experiment uses $\gamma L\in\{0.001,0.003,0.005,0.01,0.02,0.03,0.05\}$ with window cutoffs $\gamma L\,\lcoh\leq\{0.01,0.03,0.1,0.3,1.0\}$, and the size study spans $n\in\{6,8,10\}$ at $L=3$, the largest size tractable for full-density-matrix evolution. Activity classification uses a noiseless preflight with a threshold of $10^{-10}$ on $\Mtwo$, far above the inactive floor near $10^{-30}$, which is the square of a double-precision gradient amplitude near $10^{-15}$. Bootstrap confidence intervals use $1000$ resamples.

The statistical analysis fits ordinary least squares to $\log\DM$, with the main model regressing on $\log(\gamma L)$ and $\log\lcoh$ and the reduced model on $\log(\gamma L)$ alone. Model comparison uses the Akaike information criterion, reported as differences relative to the best model. Robust uncertainty is assessed through HC0 standard errors, and a cluster bootstrap over channel-depth cells with $1000$ replicates, a mixed-effects model with random intercepts for channel and depth checks that the exponents are not group-structure artefacts, and a permutation test reassigns the channel rates to labels across $1000$ permutations. Cross-validation uses leave-one-channel-out, leave-one-depth-out and leave-one-noise-level-out folds.

\section{Results}\label{sec:results}

\subsection{Active gradients localise to the light cone}\label{sec:res-lightcone}

Figure~\ref{fig:activity} confirms the prediction numerically. The noiseless activity map $\log_{10}\Mtwo(\theta_{\ell,j};0)$ at $n=8$ and $r=1$ shows that active parameters sit inside the backward light cone of the readout, the active set grows with depth, and every parameter outside the cone rests at the numerical floor near $10^{-30}$. The red outline is the analytic light cone inferred from the brickwork causal structure, and it coincides cell for cell with the numerically active region. Across the maps the median active-to-inactive ratio of $\Mtwo$ exceeds $10^{27}$ at both depths, a structural separation rather than a faint signal lifted above a sampling floor. Layer $\ell=0$ stays inactive because the frozen initial state and the small-box prior make the first rotation layer act trivially on the gradient at leading order.

\begin{figure}[t]
\centering
\includegraphics[width=0.95\textwidth]{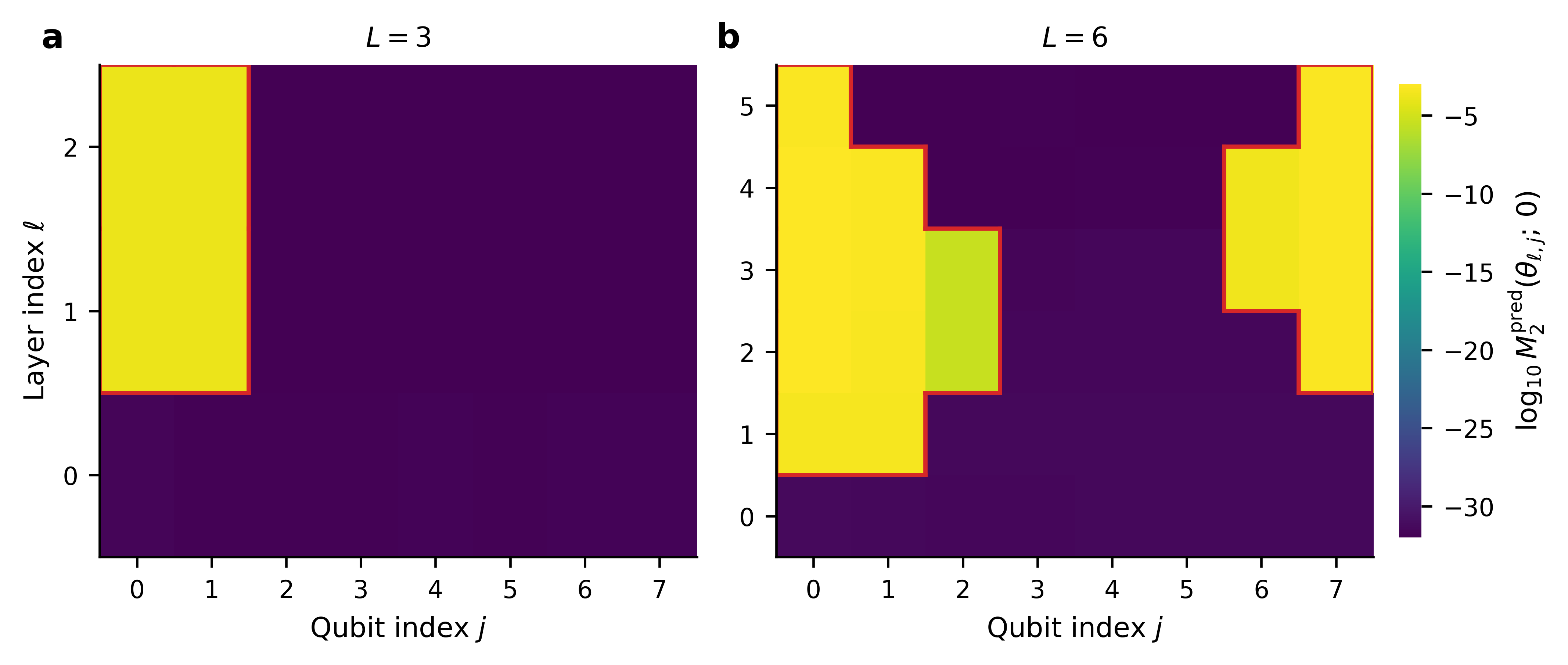}
\caption{Backward-light-cone localisation of active gradients. Numerically computed prediction-gradient second moment $\log_{10}\Mtwo(\theta_{\ell,j};0)$ for $n=8$ qubits in charge sector $r=1$ at depths $L=3$ (a) and $L=6$ (b). The red outline marks the analytic readout backward light cone inferred from the brickwork causal structure, a theoretical prediction rather than a contour fitted to the heatmap. Active parameters lie inside the cone while inactive parameters rest at the numerical floor near $10^{-30}$, giving a median active-to-inactive ratio above $10^{27}$. The $L=6$ cone wraps around the cycle because of the periodic boundary.}\label{fig:activity}
\end{figure}

\subsection{The perturbative law holds at small noise}\label{sec:res-perturb}

Figure~\ref{fig:perturb} validates the perturbative prediction using paired small-noise simulations at $n=8$, $L=4$ and $r=1$, in which the noiseless and noisy second moments share common random numbers so that the ratio defining $\DM$ is far less sensitive to sampling-floor effects. The paired degradation $\widetilde{\Delta}=1-\Mtwo(\theta_i;\gamma)/\Mtwo(\theta_i;0)$ is linear in $\gamma L$ for all four channels, with per-channel $R^2$ above $0.986$. The inferred alignment estimate $\weff$ stays within $[0,1]$ and approaches channel-specific limits, near one for amplitude damping and near one-half for dephasing. Pooled fits over increasingly tight perturbative windows recover exponents close to the target value of 1.

\begin{figure}[t]
\centering
\includegraphics[width=\textwidth]{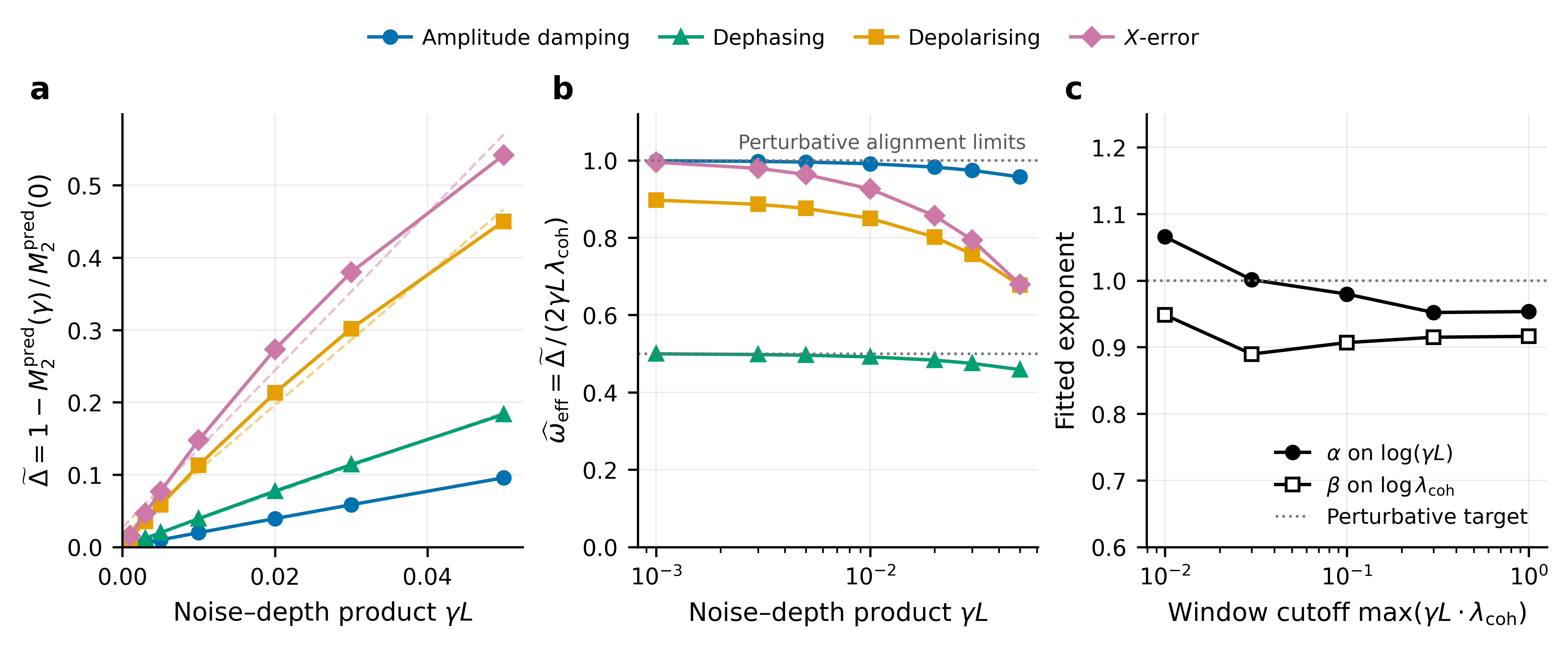}
\caption{Perturbative sector-coherence loss. Paired small-noise simulations at $n=8$, $L=4$, $r=1$ for the four restricted-isotropic channels. (a) Per-channel linearity of $\widetilde{\Delta}=1-\Mtwo(\gamma)/\Mtwo(0)$ in the noise-depth product $\gamma L$, with dashed small-noise tangents and per-channel $R^2$ above $0.986$. (b) The perturbative alignment estimate $\weff$ approaches channel-specific limits at $1$ and $1/2$ and remains in $[0,1]$. (c) Pooled exponents under tightening window cutoffs on $\gamma L\,\lcoh$, with the exponent on $\log(\gamma L)$ contracting towards one and the coherence exponent between $0.89$ and $0.95$.}\label{fig:perturb}
\end{figure}

\subsection{Finite-noise degradation follows accumulated coherence loss}\label{sec:finitenoise}

The perturbative law predicts the leading-order behaviour, but the operationally relevant regime extends to finite noise where higher-order terms are no longer negligible. The main sweep covers $n=8$, $r=1$, depths $L\in\{3,4,5,6\}$, the four restricted-isotropic channels and six noise levels per channel-depth pair, giving $96$ settings. Each setting is itself an ensemble estimate of $\DM$ with bootstrap confidence intervals, so the regression sample size of $96$ counts independent physical settings. The central regression is the log-linear model,
\begin{equation}
    \log\DM
    =
    c+\alpha\log(\gamma L)+\beta\log\lcoh+\epsilon ,
    \label{eq:fit}
\end{equation}
and on the pooled data it gives $\alpha=1.002$, $\beta=0.964$ and $R^2=0.9789$, so that,
\begin{equation}
    \DM\propto(\gamma L)^{1.002}\,\lcoh^{0.964}.
    \label{eq:law}
\end{equation}
Both exponents sit close to one, which is the accumulated form of the perturbative law specialised to the restricted-isotropic family, where $\lvis=\lcoh$ for every channel so that regressing on $\lcoh$ is the same as regressing on the aligned rate. Because Theorem~\ref{thm:perturb} is stated for phase-covariant noise, the primary theorem-supporting fit uses only the three symmetry-preserving channels, giving $\alpha=1.006$, $\beta=0.889$ and $R^2=0.978$, statistically indistinguishable from the pooled result. When the symmetry-breaking $X$-error channel is held out and predicted from the three-channel fit, the held-out coefficient of determination is $0.949$, comparable to the within-family folds, which shows that it falls on the same scalar law without being part of the theorem-supporting regression.

The finite-noise degradation is therefore organised, to good accuracy, by the single accumulated variable $\gamma L\,\lcoh$, and Figure~\ref{fig:collapse} displays the collapse. All four depths fall on a common line, so the relevant variable is the accumulated product $\gamma L$ rather than depth itself. Adding an explicit $\log L$ term changes nothing of substance, with a fitted coefficient of $-0.031$, a $p$-value of $0.71$ and no improvement in $R^2$.

\begin{figure}[t]
\centering
\includegraphics[width=\textwidth]{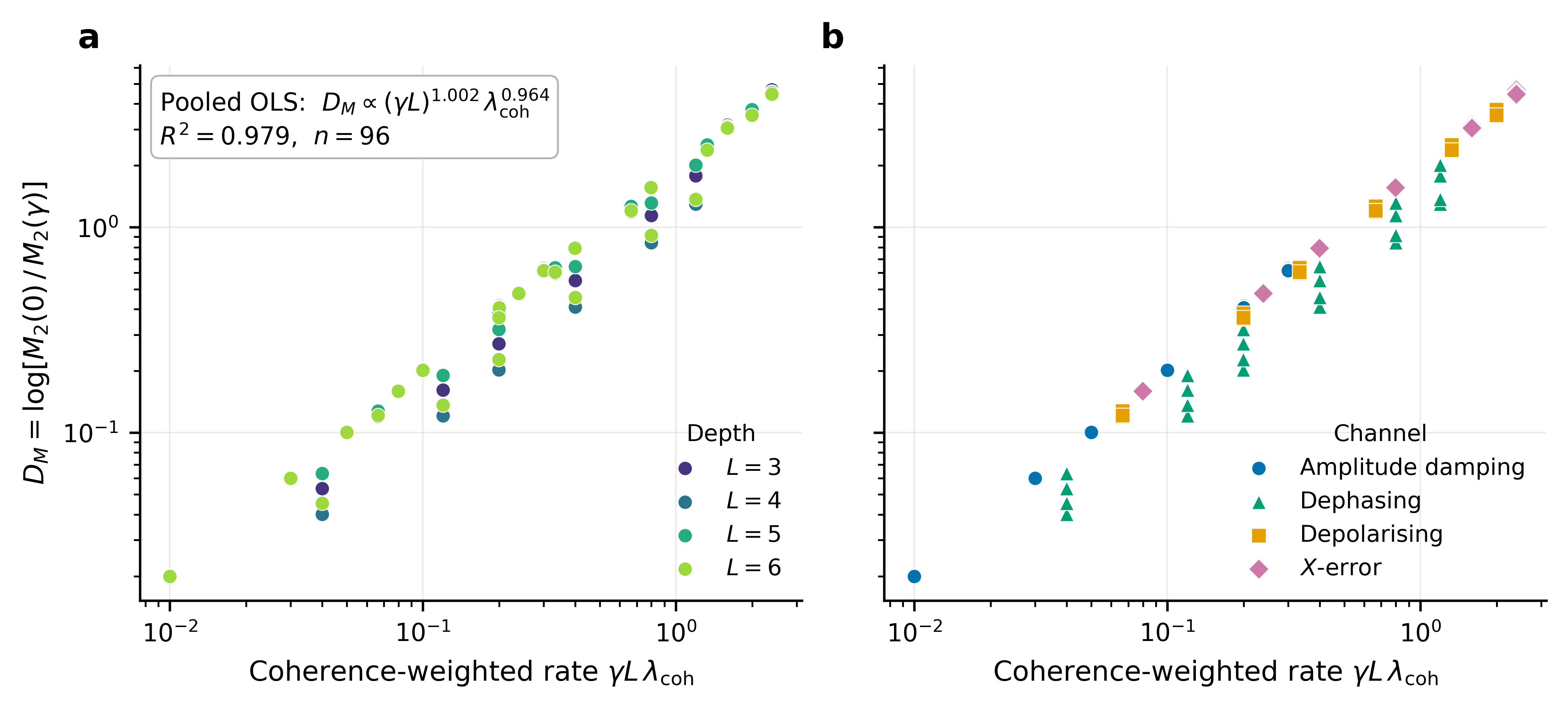}
\caption{Finite-noise accumulated coherence collapse. Gradient degradation $\DM$ against the coherence-weighted noise rate $\gamma L\,\lcoh$ for the main sweep at $n=8$, $r=1$, $L\in\{3,4,5,6\}$, four restricted-isotropic channels and six noise levels per channel-depth pair, giving $n=96$ settings. (a) Points coloured by circuit depth. (b) The identical points coloured by the noise channel. The pooled fit gives $\DM\propto(\gamma L)^{1.002}\lcoh^{0.964}$ with $R^2=0.979$. The dephasing offset band in (b) reflects residual channel structure absorbed by the alignment-weighted refinement.}\label{fig:collapse}
\end{figure}

A fair reading of the headline fit asks how much of the explained variance is already captured by noise depth alone. The noise-depth-only regression on $\log(\gamma L)$ explains $R^2=0.670$. Adding the coherence rate raises $ R^2$ to $ 0.979$ and reduces the root-mean-square error from $0.810$ to $0.205$. The unique contribution of the coherence rate is captured by the partial coefficient of determination,
\begin{equation}
    R^2_{\mathrm{partial}}
    =
    \frac{R^2_{\mathrm{full}}-R^2_{\gamma L\text{-only}}}{1-R^2_{\gamma L\text{-only}}}
    =0.936,
    \label{eq:partial}
\end{equation}
so after accounting for noise depth the sector-coherence rate explains roughly $94\%$ of the variance that noise depth alone leaves unresolved. The fit is stable under heteroscedasticity-robust standard errors, a cluster bootstrap over channel-depth cells, mixed-effects modelling with random intercepts for channel and depth, and leave-one-depth-out and leave-one-noise-level-out validation, with held-out $R^2$ between $0.93$ and $0.99$.

\subsection{Sector coherence outperforms standard diagnostics}\label{sec:predictor}

The coherence rate would be of limited interest if any reasonable channel-strength metric organised the same data equally well. Figure~\ref{fig:predictor} compares the main model against the standard alternatives, each pairing $\log(\gamma L)$ with one channel metric on the identical pooled response. Pairing $\gamma L$ with $\lcoh$ gives $R^2=0.979$, against $0.961$ for purity loss, $0.949$ for state off-diagonal loss, roughly $0.808$ for average infidelity, unitarity loss and the diamond-distance proxy, and $0.670$ for noise depth alone. The same ordering appears in the Akaike information criterion, where the coherence-rate model improves on the gate-level metrics by more than $210$ units. The advantage is not that the model carries an extra predictor. Average infidelity, unitarity, purity and the generic off-diagonal loss of the full state all average over degrees of freedom that play no part in the active-gradient response, whereas the coherence rate is the contraction rate on the intra-sector off-diagonal subspace through which the projected readout actually responds. A diagnostic matched to the response-carrying subspace outperforms diagnostics that dilute the same physics across the whole space, and the result survives a permutation null with one-sided $p\simeq0.045$.

\begin{figure}[t]
\centering
\includegraphics[width=\textwidth]{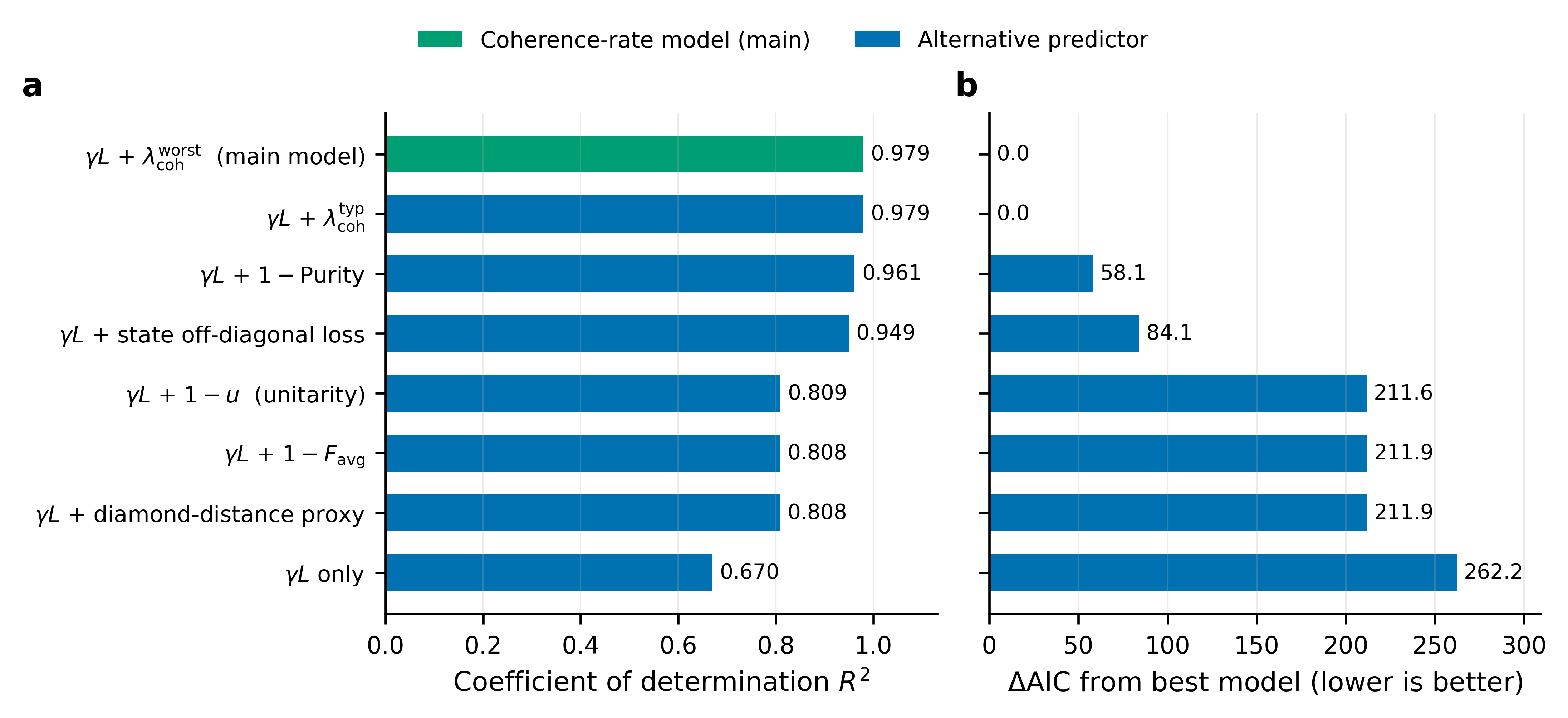}
\caption{Sector coherence outperforms standard noise diagnostics. Predictor comparison for the $\log\DM$ regression on the pooled dataset with $n=96$ settings. (a) Coefficient of determination $R^2$ by predictor set, each alternative pairing $\log(\gamma L)$ with one channel metric, with the coherence-rate model highlighted. (b) The difference in the Akaike information criterion from the best model, with lower being better. The coherence-rate model improves on the gate-level metrics by more than $210$ AIC units and on the noise-depth baseline by more than $260$.}\label{fig:predictor}
\end{figure}

\subsection{Structured noise reveals the aligned rate}\label{sec:structured}

The restricted-isotropic family is the reference case, because its restricted generators contract every off-diagonal direction at the same rate. Structured noise breaks this degeneracy and tests whether the bare scalar $\lcoh$ is still the right object. The structured study uses five anisotropic channels at $n=8$, $L=3$ and $r=1$. These are inhomogeneous dephasing, site-dependent amplitude damping, biased Pauli noise, a coherent-dissipative mixed channel, and correlated two-site dephasing. Figure~\ref{fig:aniso} presents the analysis.

Four of the five structured channels follow the same power law in $\gamma L$ as the restricted-isotropic family, each along its own coherence-weighted line. The fifth, correlated dephasing, is the decisive control. It is built so that its decaying coherence direction runs orthogonal to the gradient-carrying mode $G_i(\theta)$ across the ensemble, which sends the aligned rate to $\lvis\approx0$ even as the worst-case rate stays large at $\lcoh=4.5$. The aligned theory predicts negligible degradation in advance, and the simulation bears this out, with the measured degradation pinned at the numerical floor, $|\DM|<3\times10^{-14}$ at every noise level. This is a direct test of the inequality $\lvis\leq\lcoh$ at its extreme, where a channel saturates a large worst-case rate while leaving the readout-visible rate near zero. As a zero-alignment control, it shows that the bare worst-case rate cannot, on its own, be the fundamental variable.

Pooling the isotropic and anisotropic data with the control left out of the fit yields $\alpha=1.001$, $\beta=1.040$ and $R^2=0.964$ across $126$ settings, with noticeably more residual scatter than the isotropic collapse. The drift of the fitted exponent from the isotropic $0.964$ towards $1.040$, together with the wider scatter, is the signature one expects when the regression is written in $\lcoh$ but the response is governed by $\lvis$. The structured channels carry $\lvis<\lcoh$ by varying amounts that the scalar regressor cannot resolve. The finite-noise empirical multiplier $\aeff$, an a posteriori estimate of $\omega_i=\lvis/\lcoh$ measured directly from the paired degradation data and bounded by one across all settings, absorbs that gap. Weighting the rate by it raises the main pooled regression from $R^2=0.979$ to $0.991$ and cuts the root-mean-square error of $\log\DM$ from $0.205$ to $0.131$. Because $\aeff$ is read off the response, it is a post hoc diagnostic rather than an independent predictor, so the decisive evidence for the aligned rate remains the zero-alignment control. In matched settings, the scalar coherence model gives weaker organisation outside the single-excitation sector, with $R^2$ of $0.71$ at $r=2$ and $0.89$ at $r=3$, so the scalar law is treated as a single-excitation, restricted-isotropic result, with direct evaluation of the aligned rate the appropriate route for higher sectors.

\begin{figure}[t]
\centering
\includegraphics[width=0.95\textwidth]{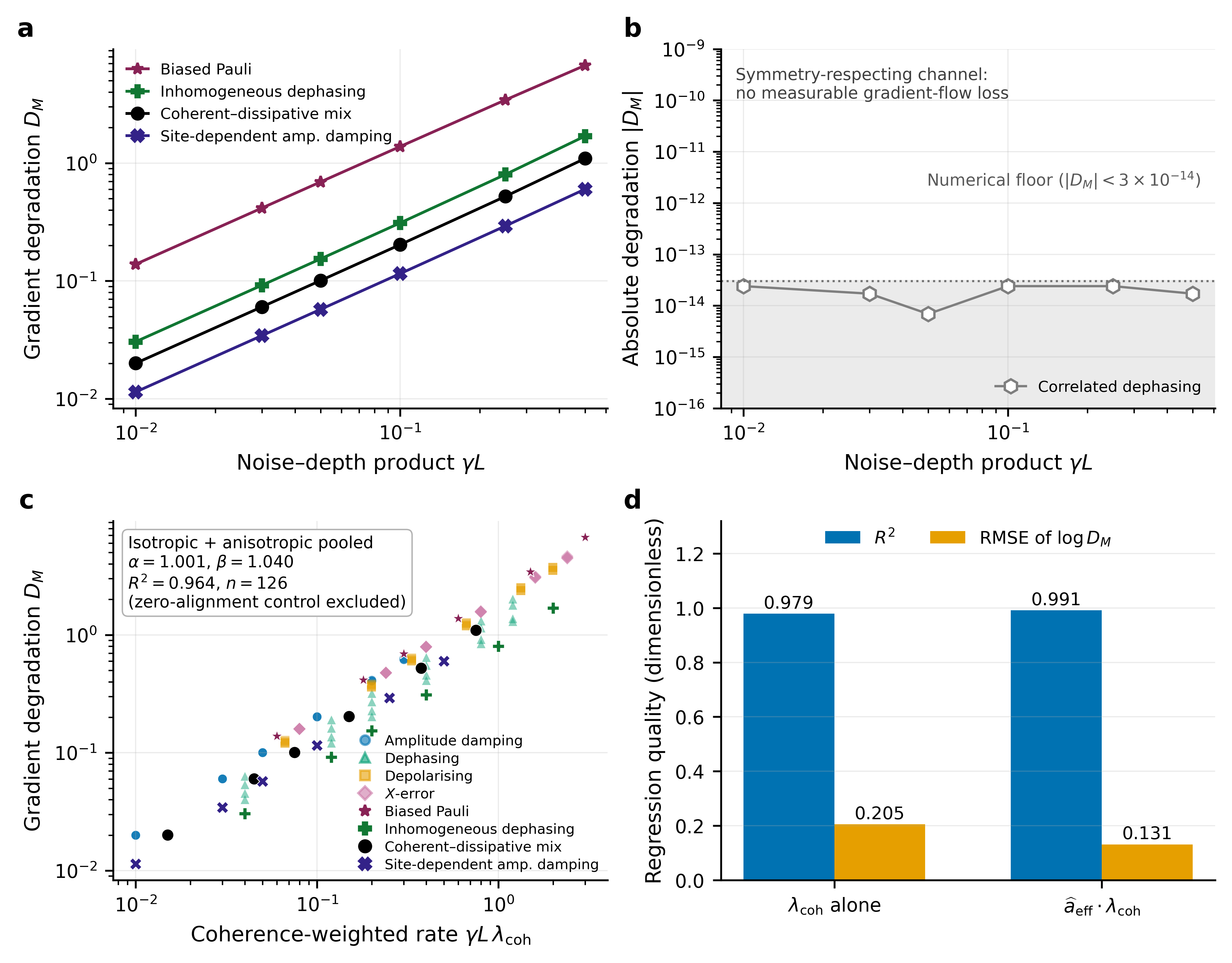}
\caption{Structured noise and readout-visible aligned coherence. Anisotropic-noise study at $n=8$, $L=3$, $r=1$, pooled with the isotropic dataset. (a) Four structured channels follow the same power law in $\gamma L$, each along its own coherence-weighted line. (b) Correlated dephasing shown separately as a zero-alignment control. Despite a non-zero worst-case rate $\lcoh=4.5$, its aligned rate satisfies $\lvis\approx0$, so the measured degradation satisfies $|\DM|<3\times10^{-14}$ at every noise level. (c) Pooled isotropic and anisotropic collapse with the control excluded, giving $\alpha=1.001$, $\beta=1.040$ and $R^2=0.964$ over $n=126$ settings. (d) Weighting the rate by the finite-noise multiplier $\aeff$ improves the regression from $R^2=0.979$ to $0.991$ and reduces the root-mean-square error of $\log\DM$ from $0.205$ to $0.131$.}\label{fig:aniso}
\end{figure}

\subsection{Size and topology checks}\label{sec:size}

Theorem~\ref{thm:locality} predicts that at fixed depth and charge sector, the degradation law is insensitive to the boundary conditions of the graph and to the total number of qubits. Both predictions hold. Replacing the cycle with an open chain at $n=8$ leaves the per-channel degradation unchanged, and a topology-pooled regression across $45$ settings yields $\alpha=1.016$, $\beta=0.943$ and $R^2=0.991$. A size check over $n\in\{6,8,10\}$ at $L=3$ collapses onto the common line with $\alpha=1.007$, $\beta=0.909$ and $R^2=0.983$, and an explicit $\log n$ term yields a negligible size exponent $\eta\simeq0.043$. The size sweep also bears on Assumption~\ref{ass:weight}, since the noiseless active second moment shows no $1/n$ suppression across the three sizes, remaining of order $10^{-5}$ to $10^{-4}$, consistent with the fixed input keeping finite light-cone sector weight as $n$ grows. The degradation is exactly $n$-invariant for amplitude damping and dephasing, while the only size trend in depolarising noise is explained by its worst-case rate rising with $n$, evaluated per size, from $\lcoh=5.32$ at $n=6$ to $7.97$ at $n=10$.

\section{Discussion}\label{sec:discussion}

Equivariance determines where gradients can live, and readout-visible sector coherence determines whether those gradients survive noise. The light-cone reduction pins the support of the active response to the sector-restricted backward light cone of the projected readout and keeps its noiseless second moment away from zero independently of system size. The perturbative law then names the off-diagonal mode $G_i$ that carries the amplitude of that response, together with the aligned rate $\lvis$ that controls its contraction. The finite-noise simulations show that this mechanism stays predictive well beyond the first-order window, with near-linear exponents in both accumulated noise depth and sector-coherence contraction. Nothing in the finite-noise regime was guaranteed by the first-order expansion, so the persistence of the near-linear law is a genuine empirical finding about this model family rather than a corollary of the theory.

The same reasoning explains why standard channel diagnostics underperform. Average infidelity, unitarity loss, purity loss, and diamond-distance proxies each measure a legitimate notion of channel disturbance, but they average over degrees of freedom that need not participate in the active-gradient response. The sector-coherence rate is tied to the intra-sector off-diagonal block through which the projected readout responds, so its predictive advantage comes from a physical match between the diagnostic and the response subspace rather than from a statistical artefact of fitting. The point generalises beyond this model. A trainability diagnostic should be matched to the operator subspace of the response it is meant to predict.

Structured noise shows where the scalar diagnostic must be refined. Correlated dephasing has a nonzero worst-case contraction yet produces no measurable gradient loss because its decaying coherence direction is orthogonal to the gradient mode, and its aligned rate is therefore near zero. This zero-alignment control is the decisive evidence that the aligned rate, not the bare worst-case rate, is the operator-level object that controls degradation, with the inequality $\lvis\leq\lcoh$ becoming an equality only for restricted-isotropic channels. The definition of $\lvis$ as a Rayleigh quotient supplies the route to an independent operator-level metric, and confirming that the directly computed $\lvis$ reproduces the empirical alignment across the structured family is the clearest next step, since the naive generator-spectrum estimator misassigns alignment for correlated dephasing and a correct estimator needs the readout projection built in.

Two practical consequences follow for noisy equivariant quantum models. One is a matter of benchmarking. Reporting symmetry covariance or sector leakage is not enough, since a channel can preserve both while erasing the coherence that carries the gradient, so a useful benchmark should also report how much sector coherence survives within the readout light cone. The other is a matter of design. Given a characterised noise model, hardware-aware ansatz design can weigh the readout light cone, the active sector, and the noise directions that contract the corresponding coherence modes, and so place the gradient-carrying coherence in the most slowly contracting directions on offer.

A few limitations bound what has been shown. The study is scoped to the single-excitation $U(1)$ sector and the restricted-isotropic family, which is the regime where the aligned rate reduces to the worst-case scalar. Higher sectors sit at the edge of that regime and show weaker single-scalar organisation, consistent with their richer off-diagonal structure needing the aligned diagnostic in place of the worst-case scalar. Full density-matrix simulation limits the accessible system sizes, with the size check reaching $n=10$. Coherent unitary miscalibration, which rotates rather than contracts off-diagonal modes, calls for a different treatment. Experimental validation of the full coherence-rate law on hardware is left to future work, and no hardware results are reported here.

\section{Conclusion}\label{sec:conclusion}

We have set out a training law for noisy equivariant quantum neural networks that ties the survival of trainable gradients to two physical effects, causality and coherence. Causality confines the active gradient to the sector-restricted backward light cone of the projected readout. A reduction theorem makes this confinement quantitative, with a lower bound independent of the total qubit number, so that adding idle qubits outside the cone cannot dilute the signal. Coherence then sets the rate at which the gradient fades, through the readout-visible aligned contraction of the off-diagonal sector mode the readout can see. A perturbative open-system analysis turns this into a leading-order law, and density-matrix simulations confirm that it holds well into the finite-noise regime as $\DM\propto(\gamma L)^{1.002}\lcoh^{0.964}$ with $R^2\simeq0.979$.

The single most informative result is the negative control. A correlated-dephasing channel with a large worst-case contraction but a near-zero aligned rate produces no measurable gradient loss, exactly as the aligned theory predicts in advance, and against what any whole-channel strength measure would suggest. Together with the systematic advantage of sector coherence over every standard channel diagnostic, this control identifies readout-visible, aligned sector coherence as the operator-level quantity governing noisy trainability, in place of symmetry covariance or sector population. The wider lesson is that a trainability diagnostic should be matched to the operator subspace of the response it predicts. Within the single-excitation, restricted-isotropic regime that matching yields a clean and reproducible law, and the Rayleigh-quotient definition of the aligned rate marks the route towards higher sectors, structured noise, and eventually hardware, where the same light-cone and coherence structure should continue to organise how gradients survive under noise.

\section*{Acknowledgments}
The authors acknowledge the computational resources provided by the Centre for Visual Computing and Intelligent Systems at the University of Bradford.

\section*{Data availability}\label{sec:data}
The complete codebase and the raw and processed data supporting the findings of this study, including all CSV files, are available in the GitHub repository at \url{https://github.com/ugail/Readout-Visible-Coherence-QNN}. The archive contains the complete code along with the processed data required to reproduce all results reported here.

\section*{Funding}
No funding was received for this work.

\subsection*{Competing interests}
The authors declare no competing interests.

\end{document}